\newcommand{\norm}[1]{ \parallel #1 \parallel}
\newcommand{\mb}{\mathbb}
\newcommand{\mc}{\mathcal}
\newcommand{\eul}{\mathfrak}
\newcommand{\A}{\eul A}
\newcommand{\Ao}{{\eul A}_{\scriptscriptstyle 0}}
\newcommand{\M}{\eul M}
\newcommand{\D}{\mc D}
\newcommand{\HH}{\mc H}
\newcommand{\id}{{\mb I}}
\newcommand{\hh}{\mc H}
\newcommand{\X}{{\mathfrak X}}
\newcommand{\vp}{\varphi}
\newtheorem{defn}{Definition}[section]
\newtheorem{prop}[defn]{Proposition}
\newtheorem{thm}[defn]{Theorem}
\newtheorem{lemma}[defn]{Lemma}
\newtheorem{example}[defn]{Example}
\newtheorem{rem}[defn]{Remark}
\def\x{\relax\ifmmode {\mbox{*}}\else*\fi}
\newcommand{\beex}{\begin{example}$\!\!${\bf }$\;$\rm }
\newcommand{\enex}{ \end{example}}
\newcommand{\berem}{\begin{rem}$\!\!${\bf }$\;$\rm }
\newcommand{\enrem}{ \end{rem}}
\newcommand{\aff}{\,\eta\,}
\newcommand{\BH}{\mc{B}(\mathcal{H})}
\newcommand{\bedefi}{\begin{defn}$\!\!${\bf }$\;$\rm }
\newcommand{\findefi}{\end{defn}}
\begin{document}
\title[Quasi C*-algebras and noncommutative integration]{Locally convex quasi C*-algebras and noncommutative integration}%

\author{Camillo Trapani}%
\address{Dipartimento di Matematica e Informatica, Universit\`a
di Palermo, I-90123 Palermo (Italy)} \email{camillo.trapani@unipa.it}
\author{Salvatore Triolo}%
\address{Dipartimento DEIM, Universit\`a di Palermo, I-90123 Palermo (Italy)}%
\email{salvatore.triolo@unipa.it} \subjclass[2000]{Primary 46L08;
Secondary 46L51, 47L60 }

\keywords{Banach C*-modules, Noncommutative integration, Partial algebras of operators}

\begin{abstract} In this paper we continue the analysis undertaken in a series of previous papers on structures arising as completions of C*-algebras under topologies coarser that their norm and we focus our attention on the
so-called {\em locally convex quasi C*-algebras}. We show, in particular, that any strongly
*-semisimple locally convex quasi C*-algebra $(\X,\Ao)$, can be represented
in a class of noncommutative local $L^2$-spaces.
\end{abstract}
\maketitle

\section{Introduction }

The completion $\X$ of a C*-algebra $\Ao$, with respect to a norm
weaker than the C*-norm  provides a mathematical framework where discussing certain quantum physical systems for which the usual algebraic approach made in terms of C*-algebras revealed to be insufficient.

First of all, $\X$ is a Banach $\Ao$-module and becomes a
quasi *-algebra if $\X$ carries an involution which extends the
involution $*$ of $\Ao$. This structure has been called a {\it
proper CQ*-algebra} in a series of papers
\cite{standard}-\cite{bagtratri}, \cite{4}-\cite{cqisaac} to that we refer for a detailed
analysis. On the other hand, if $\X$ is endowed with an isometric
involution different from that of $\Ao$, then the structure
becomes more involved.

 CQ*-algebras are examples of  more general structures called {\em locally convex quasi C*-algebras} \cite{con maria}.
They are obtained by completing a  C*-algebra with respect to a new locally convex topology $\tau$ on $\A_0$ compatible with the corresponding $\| \cdot \|$-topology.
Under certain conditions on $\tau$ a quasi $*$-subalgebra $\A$ of the completion $\widetilde{\A_0}[\tau]$ is a locally convex quasi *-algebra which is named locally convex quasi C*-algebra.

In \cite{bttjmp} quasi *-algebras of measurable and/or integrable operators (in the sense of Segal \cite{segal}, \cite{S.Triolocirc2} and Nelson \cite{nelson}) were examined in detail and it was proved that any *-semisimple CQ*-algebra can be realized as a CQ*-algebra of
measurable operators, with the help of a particular class of positive bounded
sesquilinear forms on $\X$.

In this paper, after a short overview of the main results obtained on this subject,
we continue our study of locally convex quasi C*-algebras and
we generalize the results obtained in \cite{bttjmp}  for proper CQ*- algebras to these structures.

The main question we pose now in the present paper is the following: given a
*-semisimple locally convex quasi C*-algebras $(\X, \Ao)$ and
the universal *-repre\-sen\-ta\-tion of $\Ao$, defined via the
Gelfand-Naimark theorem, can $\X$ be realized as a locally convex quasi C*-algebra of
operators  of type  $L^2$?

The paper is organized as follows. We begin with giving a short overview of noncommutative $L^p$-spaces (constructed starting from
a von Neumann algebra $\M$ and a normal, semifinite, faithful trace $\varphi$ on $\M$), considered as CQ*-algebras.  We also introduce the noncommutative $L^p_{\rm loc}$-space constructed on a von Neumann algebra possessing a family of mutually orthogonal central projections whose sum is the identity operator. We show that $(L^p_{\rm loc}(\varphi), \M)$ is a locally convex quasi C*-algebra.

Finally we give some results on the structure of locally convex quasi C*-algebras: we prove that any locally convex quasi C*-algebra $(\X,\Ao)$
possessing a sufficient family of bounded positive tracial
sesquilinear forms  can be continuously embedded into a
locally convex quasi C*-algebras of measurable operators
of the type $(L^2_{\rm loc}(\varphi), \M).$

\subsection{Definitions and results on noncommutative
measure}

\bigskip
The following basic definitions and results on noncommutative
measure theory and integration are needed in what follows.
 Let $\M$ be a von Neumann
algebra on a Hilbert space $\HH$ and $\vp$ a normal faithful semifinite trace defined on
$\M_+$.

Put $${\mc J}=\{ X \in \M: \vp(|X|)<\infty \}.$$ ${\mc J}$ is a
*-ideal of $\M$.

Let $P\in {\rm Proj}{(\M)}$, the lattice of projections of $\M$.
Two projections $P,Q\in {\rm Proj}{(\M)}$ are called equivalent,
$P \sim Q $ if there is a $U\in \M$ with $U^*U=P$ and $UU^*=Q.$ We
say that $P\prec Q$ in the case where $P$ is equivalent to a
subprojection of $Q.$

\medskip
A projection $P$ of a von Neumann algebra $\M$ is said to be {\em
finite} if $P\sim Q \leq P$ implies $P=Q.$ A projection $P\in
\M$ is said to be {\em purely infinite} if there is no nonzero
finite projection  $Q\preceq P$ in $\M.$ A von Neumann algebra
$\M$ is said to be {\em finite} (respectively, {\em purely
infinite}) if the identity operator $\id$ is finite (respectively,
purely infinite).

We say that $P$ is $\vp$-finite if $P \in {\mc J}$. Any
$\vp$-finite projection is finite.

In what follows we will need the following result (see, \cite[Vol. IV, Ex. 6.9.12]{kadison}) that we state as a lemma.
\begin{lemma} \label{risultato k} Let $\M$ be a von Neumann algebra  on a Hilbert space $\HH$ and $\vp$ a normal faithful semifinite trace defined on
$\M_+$. There is an orthogonal family $\{Q_j; {j\in J} \}$  of nonzero central projections
in $\M$ such that $ \bigvee_{j\in J} Q_j =\id $ and each $Q_j$ is the sum of an orthogonal
family of mutually equivalent finite projections in $\M$  .
\end{lemma}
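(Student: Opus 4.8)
The plan is to exploit that the existence of a normal faithful semifinite trace $\vp$ makes $\M$ semifinite: below every nonzero projection there sits a nonzero $\vp$-finite, hence finite, projection. Let me call a central projection $Q$ \emph{good} if it is the orthogonal sum of a family of mutually equivalent finite projections of $\M$; the assertion is then precisely that $\id$ is the join of an orthogonal family of nonzero good central projections. I would first reduce this to a single local statement by a Zorn argument: among all orthogonal families of nonzero good central projections pick a maximal one $\{Q_j\}_{j\in J}$ and set $Q=\bigvee_j Q_j$, which is again central. If $Q\ne\id$, then $\id-Q$ is a nonzero central projection, so it suffices to establish the following.

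\medskip
\noindent\textbf{Claim.} Every nonzero central projection $Z$ dominates a nonzero good central projection.

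\medskip
Granting the Claim, the nonzero good central projection produced inside $\id-Q$ is orthogonal to every $Q_j$, contradicting maximality; hence $Q=\id$ and $\{Q_j\}$ is the desired family.

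To prove the Claim I would distinguish two cases. If $Z$ dominates a nonzero \emph{finite} central projection $C_0$, then $C_0$ is trivially good, being the sum of the one-element family $\{C_0\}$. Otherwise no nonzero central subprojection of $Z$ is finite, i.e. $Z\M$ is properly infinite. Using semifiniteness I then choose a nonzero finite projection $E_0\le Z$ and let $C=C_{E_0}$ be its central support, a nonzero central projection below $Z$ in which $E_0$ has full central support. By Zorn's lemma fix a maximal orthogonal family $\{F_i\}_{i\in I}$ of projections of $C\M$ each equivalent to $E_0$, and put $F=\sum_i F_i$, $R=C-F$. The comparison theorem supplies a central projection $P\le C$ with $PR\prec PE_0$ and $(C-P)E_0\prec (C-P)R$. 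Maximality forces $P\ne 0$: were $P=0$, we would get $E_0=CE_0\prec CR=R$, a full copy of $E_0$ orthogonal to all the $F_i$, contradicting maximality.

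On $P\M$ the projections $PF_i$ are mutually equivalent copies of the finite projection $PE_0$ (central projections commute with the implementing partial isometries), with orthogonal sum $PF$ and remainder $P-PF=PR\prec PE_0$. Since $P\le Z$, the algebra $P\M$ is again properly infinite, so the index set $I$ must be infinite: otherwise $P=PF+PR$ would be a sum of two finite projections and hence finite. This is the decisive point. Because $I$ is infinite and $PR\prec PE_0\sim PF_i$, I would absorb the remainder by a Hilbert-hotel argument: choosing a countable subfamily $\{PF_{i_k}\}_k$ one checks $PR+\sum_k PF_{i_k}\prec\sum_k PF_{i_k}$, while the reverse subequivalence is obvious, so the Cantor--Bernstein theorem for von Neumann algebras yields $\sum_k PF_{i_k}\sim PR+\sum_k PF_{i_k}$; transporting the partition $\sum_k PF_{i_k}$ through the implementing partial isometry re-expresses $PR+\sum_k PF_{i_k}$ as an orthogonal sum of copies of $PE_0$. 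Consequently $P$ is an orthogonal sum of mutually equivalent finite projections, i.e. a nonzero good central projection below $Z$, and the Claim follows.

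The main obstacle is exactly this reconciliation of the comparison theorem with the requirement of an \emph{exact} covering by mutually equivalent finite projections. The comparison theorem only guarantees that, after passing to the central summand $P$, the leftover $R$ is dominated by a single copy of $E_0$; it is proper infiniteness of $P\M$ (forced here by the absence of a finite central projection below $Z$) that makes $I$ infinite and thereby lets the leftover be swallowed without spoiling mutual equivalence. The remaining ingredients---existence of finite projections of prescribed central support, the behaviour of equivalence under multiplication by central projections, and the Cantor--Bernstein theorem for projections---are standard facts of comparison theory and require only routine verification.
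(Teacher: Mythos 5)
Your proof is correct. Note first that the paper does not actually prove this lemma: it is stated as a quotation of \cite[Vol. IV, Ex. 6.9.12]{kadison}, so the only ``proof'' in the paper is that citation, and your argument is in effect a complete, self-contained solution of that exercise by the standard comparison-theory route. Each step holds up under scrutiny: the Zorn reduction to the claim that every nonzero central projection $Z$ dominates a nonzero ``good'' central projection (your terminology); the dichotomy between $Z$ dominating a nonzero finite central projection and $Z\M$ being properly infinite; in the properly infinite case, the comparison theorem applied to $R=C-F$ versus $E_0$, with $P\neq 0$ forced by maximality of $\{F_i\}$, and $PE_0\neq 0$ guaranteed because $P$ lies under the central support $C$ of $E_0$ (this last point deserves to be said explicitly, since the construction degenerates if $PE_0=0$, but it is exactly why you chose $C$ to be the central support); the inference that $I$ is infinite, since otherwise $P$ would be an orthogonal sum of finitely many finite projections and hence finite, contradicting proper infiniteness; and the Hilbert-hotel absorption of $PR$ via the shift, Cantor--Bernstein, and transport of the partition through the implementing partial isometry, which together with the untouched $PF_i$, $i\in I\setminus\{i_k\}$, exhibits $P$ as an orthogonal sum of mutually orthogonal projections all equivalent to the finite projection $PE_0$. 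What your approach buys over the bare citation is twofold: it is self-contained modulo genuinely standard facts (behaviour of equivalence under central cut-downs, orthogonal sums of equivalences, Cantor--Bernstein, existence of nonzero $\vp$-finite subprojections --- all in Chapter 6 of \cite{kadison}), and it makes transparent that the trace enters only through semifiniteness of $\M$; no other property of $\vp$ is used, so the conclusion holds for any semifinite von Neumann algebra.
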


A vector subspace $\D$ of $\HH$ is said to be {\em strongly
dense} ( resp., {\em strongly $\vp$-dense}) if
\begin{itemize}
\item $U'\D \subset \D$ for any unitary $U'$ in $\M'$;
\item there exists a sequence $P_n \in {\rm Proj}{(\M)}$: $P_n \HH
\subset \D$ , $P_n^\perp \downarrow 0$ and ${P_n^\perp}$ is a
finite projection (resp., $\vp{(P_n^\perp)}<\infty$).
\end{itemize}

Clearly, every strongly $\vp$-dense domain is strongly dense.

Throughout this paper, when we say that an operator $T$ is affiliated with a von Neumann algebra, written $T \aff \M$, we always mean that $T$ is closed, densely defined and $TU\supseteq UT$ for every unitary operator $U \in \M'$.

An operator $T \aff \M$ is called
\begin{itemize}
\item {\em measurable} (with respect to
$\M$) if its domain $D(T)$ is strongly dense;
\item {\em $\vp$-measurable} if its domain
$D(T)$ is strongly $\vp$-dense. \label{meaurability}
\end{itemize}

From the definition itself it follows that, if $T$ is
$\vp$-measurable, then there exists $P \in {\rm Proj}{(\M)}$ such
that $TP$ is bounded and $\vp(P^\perp)<\infty$. \label{main}

We remind that any operator affiliated with a finite von Neumann
algebra is measurable \cite[Cor. 4.1]{segal} but it is not
necessarily $\vp$-measurable.

\berem The following statements will be used later.
\begin{itemize}
\item[(i)] Let $T \aff \M$ and $Q \in \M$. If $D(TQ)=\{\xi \in \HH: Q\xi \in
D(T)\}$ is dense in $\HH$, then $TQ \aff \M$. \label{tq2}
\item[(ii)] If $Q\in {\rm Proj}{(\M)}$, then $Q\M Q=\{ QXQ\upharpoonright_{Q\HH}; X
\in \M\}$ is a von Neumann algebra on the Hilbert space $Q\HH$;
moreover $(Q\M Q)'=Q\M'Q$.  If $T \aff \M$ and $Q \in \M$ and
$D(TQ)=\{\xi \in \HH: Q\xi \in D(T)\}$  is dense in $\HH$, then
$QTQ \aff Q\M Q$.
\end{itemize}
\enrem

Let $\M$ be a von Neumann algebra  on a Hilbert space $\HH$ and $\vp$ a normal faithful
semifinite trace defined on $\M_+$. For each $p\geq 1$, let
$${\mc J_p}=\{ X \in \M: \vp(|X|^p)<\infty \}.$$ Then ${\mc J_p}$ is a
*-ideal of $\M$. Following \cite{nelson}, we denote with
$L^p(\vp)$ the Banach space completion of ${\mc J_p}$ with respect
to the norm
$$\|X\|_{ {p, \vp}} := \vp(|X|^p)^{1/p}, \quad X \in {\mc J_p}.$$
One usually defines $L^\infty(\vp): = \M$. Thus, if $\vp$ is a
finite trace, then $L^\infty(\vp) \subset L^p(\vp)$ for every
$p\geq 1$. As shown in \cite{nelson}, if $X \in L^p(\vp)$, then
$X$ is a measurable operator.

If $A$ is a {measurable} operator and $A\geq 0$, one defines the {\em integral} of $A$ by
$$ \mu (A) = \sup\{\vp(X):\; 0 \leq X \leq A, \, X \in {\mc J}_1 \}.$$

Then the space $L^p(\vp)$ can also be defined \cite{nelson}  as
the space of all measurable operators $A$ such that $\mu (|A|^p) < \infty
$.

The {integral} of an element $A\in L^p(\vp)$ can be defined,
in obvious way, taking into account that any measurable operator
$A$ can be decomposed as $A= B_+-B_- +iC_+-iC_-$, where
$B=\frac{A+A^*}{2}$ and $C=\frac{A-A^*}{2i}$ and $B_+, B_-$ (resp. $C_+, C_-$) are the positive and negative parts of $B$ (resp. $C$).

\subsection{Locally convex quasi C*-algebras}

\bigskip

\vspace{1ex} In what follows we recall some definitions and facts
needed in the sequel.
 \bedefi Let $\X$ be a complex vector space and $\Ao$ a  $^\ast$ -algebra contained in $\X$. $\X$ is said
  a \textit{quasi  $^\ast$-algebra with distinguished  $^\ast$-algebra $\Ao$} (or, simply, over $\Ao$) if

\begin{itemize}\item[(i)]\label{11} the multiplication of $\A_0$ is extended on $\X$ as
follows: the correspondences

$\X \times \A_0 \rightarrow \A : (a,x) \mapsto ax \text{ (left
multiplication of $x$ by $a$) } $ and

$ \A_0 \times \X \rightarrow \A: (x,a) \mapsto xa
\text{ (right multiplication of $x$ by $a$) }$\\
are always defined and are bilinear;

\item[(ii)] $x_1(x_2 a) = (x_1 x_2) a, (a x_1)x_2= a(x_1 x_2)$ and $x_1
(a x_2) = (x_1 a ) x_2$, for all $x_1, x_2 \in \A_0$ and $a \in
\X$;

\item[(iii)] the involution $*$ of $\A_0$ is extended on $\X$, denoted
also by $*$, and satisfies $(ax)^* = x^* a^*$ and $(xa)^* = a^* x^*$,
for all $x \in \A_0$ and $a \in \X$.
  \end{itemize}
 \findefi

\bigskip Thus a {\em quasi *-algebra} \cite{schbook} is a
couple $(\X, \A_{\scriptscriptstyle 0})$, where $\X$ is a vector space with involution
$^*$, $\A_{\scriptscriptstyle 0}$ is a *-algebra and a vector subspace of $\X$ and $\X$
is an $\A_{\scriptscriptstyle 0}$-bimodule whose module operations and involution
extend those of $\A_{\scriptscriptstyle 0}$. The {\it unit} of $(\X, \A_{\scriptscriptstyle 0})$ is an
element $e\in \A_{\scriptscriptstyle 0}$ such that $xe=ex=x$, for every $x \in \X$.

 A quasi *-algebra $(\X, \A_{\scriptscriptstyle 0})$ is said to
be {\it locally convex} if $\X$ is endowed with a topology $\tau$
which makes $\X$ a locally convex space and such that the
involution $a \mapsto a^*$ and the multiplications $a \mapsto ab$,
$a \mapsto ba$, $b \in \A_{\scriptscriptstyle 0}$, are continuous.
If $\tau$ is a norm topology and the involution is isometric with
respect to the norm, we say that $(\X, \A_{\scriptscriptstyle 0})$
is a {\it normed quasi *-algebra}   and, if it is complete, we say
it is a {\it Banach quasi*-algebra}.

Let $\A_0[\| \cdot \|_0]$ be a C*-algebra. We shall use the
symbol $\| \cdot \|_0$ of the C*-norm to denote the corresponding
topology. Suppose that $\tau$ is a topology on $\A_0$ such that
$\A_0[\tau]$ is a locally convex $*$-algebra. Then, the topologies
$\tau$, $\| \cdot \|_0$ on $\A_0$ are  \textit{compatible} whenever
each Cauchy net in both topologies that converges with respect to
one of them, also converges  with respect to the other one.

Under certain conditions on $\tau$ a quasi $*$-subalgebra $\A$ of
the quasi $*$-algebra $\X=\widetilde{\A_0}[\tau]$ over $\A_0$ is
named {\em{locally convex quasi C*-algebra}}. More precisely, let
$\{p_\lambda\}_{\lambda \in \Lambda}$ be a directed family of
seminorms defining the topology $\tau$. Suppose that $\tau$ is compatible with $\| \cdot \|_0$ and that it has
the following properties:

\begin{itemize}
\item (T$_1$) $\A_0[\tau]$ is a locally convex $*$-algebra with separately continuous multiplication.

\item (T$_2$) $\tau \preceq \| \cdot \|_0$.
\end{itemize}

Then, the identity map $\A_0[\| \cdot \|_0] \rightarrow \A_0[\tau]$
extends to a continuous $*$-linear map $\A_0[\| \cdot \|_0]
\rightarrow \widetilde{\A_0}[\tau].$
Since $\tau, \| \cdot \|_0$
are compatible, the C*-algebra $\A_0[\| \cdot \|_0]$ can be
regarded as embedded into $\widetilde{\A_0}[\tau]$. It is easily shown
that $\widetilde{\A_0}[\tau]$ is a quasi $*$-algebra over $\A_0$
(cf. \cite[Section 3]{maria-i-kue}).

We denote by  $(\A_0)_+$  the set of all positive elements of the
C*-algebra $\A_0[\| \cdot \|_0].$

Further, we employ the following two extra conditions (T$_3$),
(T$_4$) for the locally convex topology $\tau$ on $\A_0$

\begin{itemize}
\item
 (T$_3$) For each $\lambda \in \Lambda$, there exists $\lambda' \in \Lambda$ such that
\[
p_\lambda(xy) \leq \| x \|_0 p_{\lambda'}(y), \forall \ x, y \in
\A_0 \text{ with } xy = yx;
\]
\item (T$_4$) The set $\mathcal{U}(\A_0)_+ := \{ x \in (\A_0)_+ : \| x \|_0 \leq 1 \}$ is $\tau$-closed.
\end{itemize}

\bedefi
By a {\em locally convex quasi C*-algebra} over $\A_0,$ (see \cite{con maria}), we mean any
 quasi $*$-subalgebra $\A$ of the locally convex quasi $*$-algebra
$\X=\widetilde{\A_0}[\tau]$ over $\A_0$, where $\A_0[\| \cdot \|_0]$ is
a C*-algebra with identity $e$ and $\tau$ a locally
convex topology on $\A_0$, defined by a directed family of seminorms $\{p_\lambda\}_{\lambda \in \Lambda}$,  satisfying the conditions (T$_1$)-(T$_4$).
\findefi

{  The following examples have been discussed in \cite{con maria}.}
\medskip
\beex[CQ*-algebras]
Let $\A_{\scriptscriptstyle 0}$ be a $C^\ast $-algebra, with norm $\|\cdot\|$ and
involution $^*.$ Let $\| \cdot \|_1$
 be a norm on $\A_{\scriptscriptstyle 0}$, weaker than $\| \cdot \|$
 and such that, for every $a,b \in  \A$
\begin{itemize}
\item[(i)]  $\|ab\|_1 \leq \|a\|_1 \|b\|,$
\item[(ii)]$ \|a^*\|_1=\|a\|_1.$
\end{itemize}
Let $\X$ denote the $\| \cdot \|_1 $-completion of $\A_{\scriptscriptstyle 0}$; then \footnote{In previous papers this structure was called {\em proper} CQ*-algebra. Since this is the sole case we consider here we will systematically omit the specification {\em proper}.} the couple $(\X,\A_{\scriptscriptstyle 0})$ is called a {\em CQ*-algebra}. {  Every CQ*-algebra is a locally convex quasi C*-algebra.}
\enex

\beex The space $L^p([0,1])$, with $1\leq p< +\infty$ is a Banach $L^\infty([0,1])$-bimodule. The couple
$(L^p([0,1]), L^\infty([0,1])$ may be regarded as a
CQ*-algebra thus a  locally convex quasi C*-algebra over $L^\infty([0,1])$.\enex

\section{ Locally convex quasi C*-algebras of measurable operators}

Let $\M$ be a von Neumann
algebra  on a Hilbert space $\HH$ and $\vp$ a normal faithful semifinite trace on $\M_+$,
then, as shown in \cite{bttjmp}, $(L^p(\vp), L^\infty(\vp)\cap L^p(\vp))$ is a {\it Banach quasi*-algebra}
and if $\vp$ is a finite trace,  $(L^p(\vp), \M)$ is a
 CQ*-algebra.

 In analogy to \cite{bttjmp} we consider the following two sets of sesquilinear forms
 enjoying certain invariance properties.
\bedefi \label{SLP} Let $(\X,\Ao)$ be a locally convex quasi
C*-algebra {  with unit $e$}. We denote by $\mc S(\X) $ the set of all sesquilinear
forms $\Omega$ on $\X \times \X$ with the following properties:
\begin{itemize} \label{sesqu}
\item[(i)]$ \Omega(x,x) \geq 0 \hspace{3mm} \forall x \in \X$;
\item[(ii)] $\Omega(xa,b) = \Omega (a,x^*b)  \hspace{3mm} \forall x \in
\X,\hspace{2mm} \forall a,b \in \Ao$;
\item[(iii)]$ |\Omega(x,y)| \leq p(x) p(y)$  \mbox{for some}\, $\tau$-\mbox{continuous seminorm} \, $p$ on $\X$ \mbox{and all} $x,y \in \X$;
{ \item[(iv)] $\Omega (e,e)\leq 1$.}
\end{itemize}

The locally convex quasi C*-algebra  $(\X,\Ao)$ is called {\em *-semisimple}  if $x \in
\X$,
 $\Omega(x,x)=0 $, for every $\Omega \in \mc S (\X)$,
implies $x=0$.

We denote by $\mc T(\X)\subseteq \mc S(\X) $ the set of all sesquilinear forms
$\Omega\in \mc S(\X)$ with the following property
\begin{itemize}

\item[(iv)]$ \Omega(x,x)=\Omega(x^*,x^*), \;\forall x \in \X$.

\end{itemize}

\findefi
\berem
\label{rem23}
Notice that
\begin{itemize}

\item
By (iv) of Definition \ref{SLP} and  by polarization, we get  $$\Omega(y^ {\ast},x^
{\ast})=\Omega(x,y),\quad \forall x,y \in \X.$$

\item \label{convessità} The set $
\mathcal{T}(\X)$ is convex.

\end{itemize}
\enrem

\beex \label{ex_2.3} Let $\M$ be a von Neumann
algebra and $\vp$ a normal faithful semifinite trace on $\M_+$. Then, ($ L^p(\vp) , {\mc J}_p$) , $p\geq 2$, is a $^*$-semisimple Banach quasi *-algebra.
 If $\vp$ is a finite trace { (we assume $\vp(\id)=1)$}, then $(L^p(\vp), \M)$, with $p\geq 2$, is a *-semisimple  locally convex quasi C*-algebra.
If $p\geq 2$, $L^p$-spaces possess a sufficient family of positive sesquilinear
forms.
Indeed, in this case, since for every $W\in L^p(\vp),$
$|W|^{p-2}\in L^{{p}/(p-2)}(\vp)$, then the sesquilinear form $\Omega_W$ defined by
$$\Omega_W(X,Y):=\frac{\vp[X(Y |W|^{p-2}){^\ast}]}{\norm{W}_{ {p, \vp}}^{p-2}}$$
is positive and satisfies the conditions
(i),(ii),(iii) and (iv)  of Definition \ref{SLP}, (see \cite{bttjmp},  and \cite{11} for more details).
Moreover, $$\Omega_W (W,W)=\norm{W}^{p}_{ {p, \vp}}.$$
\enex

\berem The notion of *-semisimplicity of locally convex partial *-algebras has been studied in full generality in \cite{ant_bell_ct} and \cite{10}. \enrem

\bedefi Let $\M$ be a von Neumann algebra and $\vp$ a normal faithful
semifinite trace defined on $\M_+$.
We denote by $L^p_{\rm loc}(\vp)$ the set of all measurable operators  $T$ such that  $TP\in L^p(\vp)$, for every central $\vp$-finite projection $P$ of $\M.$
 \label{msub} \findefi
\berem \label{2.6} The von Neumann algebra $\M$ is a subset of $L^p_{\rm
loc}(\vp).$ Indeed, if $X\in\M,$ then for every  $\vp$-finite
central projection $P$ of $\M$ the product $XP$  belongs to the
*-ideal ${\mc J_p}.$

\enrem

Throughout this section we are given a von Neumann algebra $\M$ on
a Hilbert space $\HH$ with a family $ \{P_j\}_{j\in J}$ of $\vp$-finite central
projections  of $\M$  such that
\begin{itemize}
\item  if $l,m \in J$, $l \neq  m$, then $P_l P_m=0$ (i.e., the $P_j$'s are orthogonal);
\item  $\bigvee_{j\in J} P_j=\id;\,$ where $\bigvee_{j\in J} P_j$ denotes the projection onto the subspace generated by $\{P_j\hh: \, j\in J \}.$
\end{itemize}
The previous two conditions are always realized in  a von Neumann  $\M$  algebra with  a faithful normal semifinite trace
(see Lemma \ref{risultato k}  and  \cite{kadison, takesaki} for more details).

If $\vp$ is a normal faithful semifinite trace on $\M_+$,  we define,
for each $X\in \M$, the following seminorms $q_j(X) :=
\|XP_j\|_{p,\vp}$ on $\M,\, j\in J.$    The traslation-invariant
locally convex topology defined by the system $\{q_j: \,j\in J\}$
is denoted by $\tau_p$.

\bedefi Let $\M$ be a von Neumann algebra and $\vp$ a normal faithful
semifinite trace defined on $\M_+$.
We denote by $\widetilde{\M}^{\tau_p}$  the $\tau_p$ completion of $\M.$
\findefi

\begin{prop}\label{prop} Let $\M$ be a von Neumann
algebra  and $\vp$ a normal faithful semifinite trace on $\M_+$.
Then $L^p_{\rm loc}(\vp)\subseteq\widetilde{\M}^{\tau_p} .$
Moreover, if there exists a family $\{P_j\}_{j\in J}$ as above, where all $P_j$' s are mutually equivalent,
then $L^p_{\rm loc}(\vp)=\widetilde{\M}^{\tau_p}.$
\end{prop}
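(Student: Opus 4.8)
The plan is to prove the two assertions separately, the inclusion in full generality and the equality under the extra equivalence hypothesis. For the inclusion $L^p_{\rm loc}(\vp)\subseteq\widetilde{\M}^{\tau_p}$, I would start from $T\in L^p_{\rm loc}(\vp)$ and produce an explicit approximating net in $\M$. Writing the polar decomposition $T=U|T|$ and letting $E_n=\chi_{[0,n]}(|T|)$ be the spectral projections of $|T|$, we have $E_n\in\M$ and $U\in\M$ (since $T\aff\M$), so each truncation $T_n:=TE_n=U|T|E_n$ is bounded and lies in $\M$. The point is then that $T_n\to T$ in $\tau_p$. Indeed, for every $j\in J$, because $P_j$ is central it commutes with $E_n$, so $q_j(T-T_n)=\|T(\id-E_n)P_j\|_{p,\vp}=\|(TP_j)(\id-E_n)\|_{p,\vp}$. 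Now $TP_j\in L^p(\vp)$ by the definition of $L^p_{\rm loc}(\vp)$, while $\id-E_n\downarrow 0$ strongly, and the truncation/dominated-convergence property of the noncommutative $L^p$-spaces (Nelson, Segal) gives $\|(TP_j)(\id-E_n)\|_{p,\vp}\to 0$. Hence $T$ is a $\tau_p$-limit of the net $(T_n)\subset\M$, so $T\in\widetilde{\M}^{\tau_p}$.

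For the reverse inclusion under the hypothesis that the $P_j$ are mutually equivalent, I would first realize the abstract completion concretely. Embed $\M\hookrightarrow\prod_{j\in J}L^p(\vp)$ by $X\mapsto(XP_j)_{j}$; by construction $\tau_p$ is the initial topology of this map, so $\widetilde{\M}^{\tau_p}$ is identified with the closure of the image. Given $T\in\widetilde{\M}^{\tau_p}$, represented by a $\tau_p$-Cauchy net $(X_\alpha)\subset\M$, each net $(X_\alpha P_j)$ is $\|\cdot\|_{p,\vp}$-Cauchy and hence converges to some $Y_j\in L^p(\vp)$; moreover $Y_j=P_jY_jP_j$ because $P_j$ is a central projection. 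Since the $P_j$ are orthogonal with $\bigvee_jP_j=\id$, the operators $Y_j$ sit in mutually orthogonal central blocks and patch together to a single measurable operator $T$ (closed, affiliated with $\M$, with strongly dense domain) satisfying $TP_j=Y_j\in L^p(\vp)$ for all $j$.

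It remains to upgrade $TP_j\in L^p(\vp)$ to $TP\in L^p(\vp)$ for every central $\vp$-finite projection $P$; this is the heart of the matter and the step where the equivalence hypothesis is essential. Writing $P=\sum_{j}PP_j$ with orthogonal central summands and using centrality of $P$, one obtains the orthogonal decomposition $\|TP\|_{p,\vp}^p=\sum_{j}\|TP_jP\|_{p,\vp}^p$, each term being bounded by $\|TP_j\|_{p,\vp}^p<\infty$. The difficulty is that this series could a priori diverge. Here the mutual equivalence of the $P_j$ forces the common value $\vp(P_j)=c>0$ (the trace being an equivalence invariant); since $\vp(P)=\sum_j\vp(PP_j)<\infty$, only finitely many of the subprojections $PP_j$ can be nonzero. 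Consequently the displayed sum is finite, $TP\in L^p(\vp)$, and therefore $T\in L^p_{\rm loc}(\vp)$, which yields the desired equality.

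I expect this last paragraph to be the main obstacle. Both the concrete identification of $\widetilde{\M}^{\tau_p}$ with a space of measurable operators — patching the $Y_j$ into a densely defined closed operator and verifying strong density of its domain — and, above all, the passage from the fixed family $\{P_j\}$ to an arbitrary $\vp$-finite central $P$ require care, and it is precisely the equal-trace consequence of mutual equivalence that rules out an infinite contribution and closes the argument.
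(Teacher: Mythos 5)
Your proof of the first inclusion is correct but takes a genuinely different route from the paper's: you truncate spectrally, setting $T_n=TE_n$ with $E_n=\chi_{[0,n]}(|T|)$, and invoke a noncommutative dominated-convergence theorem to get $\|(TP_j)(\id-E_n)\|_{p,\vp}\to 0$; the paper instead approximates each $YP_j$ in $\|\cdot\|_{p,\vp}$ by elements $X_n^j$ of the ideal ${\mc J}_p$ (using only the definition of $L^p(\vp)$ as a completion) and shows that the finite sums $T_{n,F}=\sum_{j\in F}X_n^jP_j$, indexed by pairs $(n,F)$, converge to $Y$ in $\tau_p$. Both arguments work; yours is shorter but rests on a nontrivial convergence-in-measure/domination fact of Fack--Kosaki type, while the paper's is elementary and self-contained.

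The second part, however, has a genuine gap at precisely the step you call the heart of the matter. From mutual equivalence you deduce only that the traces $\vp(P_j)$ share a common value $c>0$, and you then claim that convergence of $\sum_j\vp(PP_j)=\vp(P)<\infty$ forces all but finitely many $PP_j$ to vanish. That inference is a non sequitur: equality of the $\vp(P_j)$ gives no lower bound on the individual summands $\vp(PP_j)$, which could a priori be positive for infinitely many $j$ (say of size $c\,2^{-j}$) while their sum converges. What is actually needed, and what the paper proves, is that the summands themselves are all equal: picking $U\in\M$ with $U^*U=P_l$ and $UU^*=P_m$, centrality of $P$ and the trace property give $\vp(PP_l)=\vp(PU^*U)=\vp(UPU^*)=\vp(PUU^*)=\vp(PP_m)$; equivalently, $PU$ implements an equivalence $PP_l\sim PP_m$. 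Only with equal terms does convergence of the series force the desired conclusion (either the common value is zero, whence $P=0$ by faithfulness, or the index set is finite), so that $YP=\sum_{j}YP_jP$ is a finite sum of elements of $L^p(\vp)$. Note that your version of the argument never uses the centrality of $P$ nor the traciality of $\vp$, which is a symptom of the missing ingredient; the fix is exactly the computation above.
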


\begin{proof}
From Remark \ref{2.6},  $\M \subseteq L^p_{\rm loc}(\vp).$ If
$Y\in L^p_{\rm loc}(\vp), $ for every ${j\in J}$ we have $YP_j\in
L^p(\vp)$. Then, for every ${j\in J}$, there exist $(X_n^j)_{n=1}^\infty\subseteq {\mc
J_p}$ such that $\displaystyle \|X_n^j- YP_j\|_{p,\vp}\underset{n\to \infty} {\longrightarrow} 0.$

{
Let ${\mb F}_J$ be the family of finite subsets of $J$ ordered by inclusion and $F\in {\mb F}_J$.

We put $$T_{n,F}:=\sum_{j\in F} X_n^jP_j \in \M. $$
Then the net $(T_{n,F})$ converges to $Y$ with respect to $\tau_p$. Indeed,
for every   ${m\in J}$,  $$ q_m(T_{n,F}-Y) = \|(T_{n,F}-Y)P_m\|_{p,\vp}=\|(X_n^m-Y)P_m\|_{p,\vp}$$ for sufficiently large $F$.
Thus, the inequality $\|(X_n^m-Y)P_m\|_{p,\vp}\leq \|X_n^m-YP_m\|_{p,\vp}$ implies that
$\displaystyle q_m(T_{n,F}-Y) \underset{n,F} {\longrightarrow} 0.$

Hence $L^p_{\rm loc}(\vp)\subseteq\widetilde{\M}^{\tau_p} .$
}

On the other hand, {  assume that all $P_j$' s are mutually equivalent}. Then, if $Y\in\widetilde{\M}^{\tau_p},$ there
exists a net  $( X_\alpha) \subseteq \M $ such that $X_{\alpha}
\rightarrow Y$ with respect to $\tau_p;$ hence

\begin{equation} \label{2.7} X_\alpha P_j \rightarrow Y P_j \in L^p(\vp) \quad \mbox{in} \norm {\cdot}_{p,\vp}.\end{equation}
But for each central $\vp-$finite projection $P$ we have \begin{equation}\label{infinita}\vp(P)=\vp(P\sum_{j\in J} P_j)=\sum_{j\in J} \vp(PP_j). \end{equation}

By our assumption, for any $l, m\in J$, we may pick  $U\in\M$ so
that $U^*U=P_{l}$ and $UU^*=P_{m},$ hence,
$$ \vp(PP_{l})=\vp( PU^*U )=\vp(UPU^*)=\vp(PUU^*)=\vp(PP_{m}).$$
So, all terms on the right hand side of \eqref{infinita} are equal
and since the above series converges, only a finite number of them
can be nonzero. Thus, for some $s \in {\mb N}$ we may write
$J=\{1,\ldots,s\}$ and then
 \begin{equation} P=P\sum_{j\in J} P_j=P\sum_{j=1}^{s} P_j=\sum_{j=1}^{s}P P_j
\end{equation}
and hence
\begin{equation} YP=\sum_{j=1}^{s}Y P P_j=\sum_{j=1}^{s}Y P_jP\in L^p(\vp).
\end{equation}
Therefore, if $Y\in \widetilde{\M}^{\tau_p}$, for each central $\vp-$finite projection $P$, we have $YP\in L^p(\vp).$ Hence
$L^p_{\rm loc}(\vp)\supseteq\widetilde{\M}^{\tau_p} .$

\end{proof}

\berem
In general it is not guaranteed  that a von Neumann algebra possesses an orthogonal family $\{P_j\}_{j\in J}$ of mutually equivalent finite central projection such that $\bigvee_{j\in J} P_j=\id, $ but, if this is the case, then $L^p_{\rm loc}(\vp)=\widetilde{\M}^{\tau_p}.$
\enrem

\begin{thm}
Let $\M$ be a von Neumann algebra on a Hilbert space $\HH$ and
$\vp$ a normal faithful semifinite trace on $\M_+$. Then
$(\widetilde{\M}^{\tau_p}, \M)$ is a {\it  locally convex quasi
C*-algebra} with respect to $\tau_p,$ consisting  of measurable
operators.
\end{thm}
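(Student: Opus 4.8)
The plan is to take $\A_0=\M$ with its C*-norm $\|\cdot\|_0$ equal to the operator norm and identity $e=\id$, and to take $\tau=\tau_p$ as the candidate locally convex topology; the assertion then amounts to checking that $\tau_p$ is compatible with $\|\cdot\|_0$ and satisfies (T$_1$)--(T$_4$), so that $(\widetilde{\M}^{\tau_p},\M)$ is a locally convex quasi C*-algebra by definition, and finally to identifying each element of the completion with a measurable operator. Since the defining family $\{q_j:j\in J\}$ of $\tau_p$ need not be directed, I would first replace it by the directed family $p_F:=\max_{j\in F}q_j$, indexed by the finite subsets $F$ of $J$, which defines the same topology. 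Throughout, the decisive structural feature I would exploit is that each $P_j$ is a \emph{central} projection, so that $P_j$ commutes with every element of $\M$ and $XP_j=P_jXP_j$.

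Most of the conditions then reduce to the H\"older inequality $\|AB\|_{p,\vp}\le\|A\|_{p,\vp}\,\|B\|_0$ for $A\in L^p(\vp)$, $B\in\M$, together with centrality. For (T$_2$) I would note $q_j(X)=\|XP_j\|_{p,\vp}\le\|X\|_0\,\vp(P_j)^{1/p}$, which gives $\tau_p\preceq\|\cdot\|_0$; for continuity of the involution in (T$_1$) I would use $X^*P_j=(XP_j)^*$ and the $*$-invariance of the $L^p$-norm to get $q_j(X^*)=q_j(X)$; separate continuity of the multiplication in (T$_1$), as well as (T$_3$), follow from $q_j(ab)\le\|a\|_0\,q_j(b)$ and $q_j(ba)\le\|a\|_0\,q_j(b)$ (no commutation is even needed, so one may take $\lambda'=\lambda$). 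Compatibility is then immediate: $\tau_p$ is Hausdorff because faithfulness of $\vp$ and $\bigvee_{j}P_j=\id$ force $X=0$ whenever all $q_j(X)=0$; combined with $\tau_p\preceq\|\cdot\|_0$ and norm-completeness of $\M$, the usual uniqueness-of-limits argument shows that any net Cauchy in both topologies and convergent in one converges in the other. For (T$_4$) I would argue that if $X_\alpha\in\mathcal{U}(\M)_+$ converges to $X$ in $\tau_p$, then $X_\alpha P_j\to XP_j$ in $\|\cdot\|_{p,\vp}$ with $0\le X_\alpha P_j\le P_j$, and closedness of the positive cone and of the order intervals in $L^p(\vp)$ yields $0\le XP_j\le P_j$ for every $j$; testing against an arbitrary vector and summing over $j$ via $\sum_jP_j=\id$ then gives $0\le X\le\id$, i.e. $X\in\mathcal{U}(\M)_+$.

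The substantive point is the last one: that every $Y\in\widetilde{\M}^{\tau_p}$ can be identified with a measurable operator. Given a $\tau_p$-Cauchy net $(X_\alpha)\subseteq\M$ representing $Y$, each net $(X_\alpha P_j)$ is $\|\cdot\|_{p,\vp}$-Cauchy, hence converges to some $Y_j\in L^p(\vp)$ with $Y_j=P_jY_jP_j$; each $Y_j$ is a measurable operator affiliated with $P_j\M$. I would then define $Y$ as the closure of the direct sum $\bigoplus_{j}Y_j$ relative to the decomposition $\HH=\bigoplus_jP_j\HH$ induced by $\bigvee_jP_j=\id$, so that $Y\aff\M$, and verify measurability through the spectral projections $R_\lambda^{(j)}:=\chi_{(\lambda,\infty)}(|Y_j|)$. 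Since $Y_j\in L^p(\vp)$, each $R_\lambda^{(j)}\le P_j$ is $\vp$-finite, hence finite, and $\chi_{(\lambda,\infty)}(|Y|)=\sum_jR_\lambda^{(j)}$.

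The main obstacle, and the heart of the argument, is to see that this sum $\sum_jR_\lambda^{(j)}$ is again a \emph{finite} projection (it need not be $\vp$-finite), so that $\chi_{[0,\lambda]}(|Y|)\HH\subseteq D(Y)$ exhibits a strongly dense domain and $Y$ is measurable. For this I would prove the auxiliary fact that a sum $E=\sum_jE_j$ of finite projections $E_j\le P_j$ with orthogonal \emph{central} supports $P_j$ is finite: if $U^*U=E$ and $UU^*=F\le E$, then centrality of the $P_j$ lets one compress the equivalence to each summand, $P_jE\sim P_jFP_j\le P_jE$, whence $P_jE=P_jFP_j$ by finiteness of $P_jE\le E_j$, and summing over $j$ gives $E=F$. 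Letting $\lambda\to\infty$, the projections $\sum_jR_\lambda^{(j)}$ decrease to $0$, completing the verification that $Y$ is measurable; that $\M$ itself consists of bounded, hence measurable, operators, together with the already-verified quasi $*$-algebra structure of $\widetilde{\M}^{\tau_p}$ over $\M$, then yields the full statement. Proposition \ref{prop} and Lemma \ref{risultato k} guarantee, respectively, that $L^p_{\rm loc}(\vp)\subseteq\widetilde{\M}^{\tau_p}$ and that a family $\{P_j\}$ with the required properties exists.
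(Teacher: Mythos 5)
Your proposal is correct, and where it overlaps with the paper's proof it follows essentially the same route: (T$_3$) is verified by the identical H\"older-plus-centrality estimate $\|XYP_j\|_{p,\vp}\le \|X\|\,\|YP_j\|_{p,\vp}$, and your (T$_4$) argument---componentwise limits $0\le XP_j\le P_j$ in $L^p(\vp)$, reassembled into a positive contraction on $\HH=\bigoplus_j P_j\HH$---is the paper's reassembly argument, except that you justify the componentwise step by closedness of the positive cone and of order intervals in $L^p(\vp)$, whereas the paper invokes the fact (from \cite{bttjmp}) that $(L^p(\vp),P_j\M)$ is a CQ*-algebra and appeals to its own property (T$_4$). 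The genuine difference is one of scope: the paper's proof explicitly verifies only (T$_3$) and (T$_4$), leaving (T$_1$), (T$_2$), the compatibility of $\tau_p$ with the operator norm, and, above all, the assertion that $\widetilde{\M}^{\tau_p}$ consists of measurable operators without a written argument; note that Proposition \ref{prop} yields only the inclusion $L^p_{\rm loc}(\vp)\subseteq\widetilde{\M}^{\tau_p}$ (with equality only when the $P_j$ are mutually equivalent), so it cannot be cited for that last claim. Your treatment of this point is therefore a real addition rather than a retracing: decomposing a $\tau_p$-Cauchy net componentwise to get $Y_j\in L^p(\vp)$, forming the direct-sum operator, and controlling $\chi_{(n,\infty)}(|Y|)=\sum_j R_n^{(j)}$ by the auxiliary lemma that a sum of finite projections with mutually orthogonal central supports is finite (proved correctly by compressing an equivalence $E\sim F\le E$ with the central $P_j$'s and using finiteness of each $E_j=EP_j$) is exactly what is needed to exhibit a strongly dense domain. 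In short, the paper's route buys brevity by outsourcing (T$_4$) to \cite{bttjmp} and leaving the measurability claim implicit; yours buys a self-contained proof of the full statement as announced.
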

\begin{proof}

The topology  $\tau_p$ satisfies the properties (T$_1$)-(T$_4$). We just prove here  (T$_3$)-(T$_4$).

\begin{itemize}\label{loccov}
\item
(T$_3$) For each $\lambda\in J$,
\[
q_{\lambda}(XY)=\norm{P_\lambda XY}_{p,\vp} \leq \| X\|
\norm{P_{\lambda} Y}_{p,\vp} = \| X\| q_{\lambda}(Y), \quad
\forall X, Y \in \M;\]
\item (T$_4$) The set $\mathcal{U}(\M)_+ := \{ X \in (\M)_+ : \| X \| \leq 1 \}$ is $\tau_p$-closed. To see this consider a net
$\{F_\alpha\}$  in $\mathcal{U}(\M)_+$ : $F_\alpha \to F$ in the
topology $\tau_p$, then for each ${j\in J}$, we have
$\norm{(F_\alpha -F)P_j}_{p, \vp} \rightarrow 0.$ By assumption on $P_j$,
the trace $\vp$ is a normal faithful finite trace on the von
Neumann algebra on  $ P_j \M_+.$ Then (see \cite{bttjmp}) $(L^p(\vp),
P_j \M )$ is a CQ*-algebra. Therefore, using (T$_4$) for
$(L^p(\vp), P_j \M ),$ we have $ FP_j\in \mathcal{U}(P_j \M)_+$
for each ${j\in J}.$ This, by definition, implies that $F\in\M.$
Indeed, for every $$h=\sum_{j\in J}P_j h \in\HH=\bigoplus_{j\in J}
P_j \HH$$ we have
\begin{eqnarray*}\norm{F h}_{\HH}^2 &=& \sum_{j\in J}\norm{F P_j h }^2 \\
 &=& \sum_{j\in J}\norm{FP_jP_j h }^2  \leq  \sum_{j\in J}\norm{P_j h }^2 \\
 &=&\norm{h}_{\HH}^2.
\end{eqnarray*}
Hence  $F\in\mathcal{U}(\M)_+.$
\end{itemize}

\end{proof}
\berem
By  Proposition \ref{prop}, $(L^p_{\rm loc}(\vp), \M)$ itself is a {\it  locally convex quasi C*-algebra} with respect to $\tau_p.$
\enrem

\section{The representation theorems}

Let  $(\X,\Ao)$ be a
locally convex quasi
C*-algebra with a unit $e.$ For each $\Omega\in\mathcal{T}(\X)$, we
define a linear functional $\omega_\Omega$ on $\Ao$ by
$$\omega_\Omega(a):=\Omega(a,e)\quad \quad a\in\Ao.$$
We have
$$\omega_\Omega(a^{\ast}a)=\Omega(a^{\ast}a,e)=\Omega(a,a)=\Omega(a^{\ast},a^{\ast})=\omega_\Omega(aa^{\ast})\geq0.$$
This shows at once that $\omega_\Omega$ is positive and tracial.

By the Gelfand-Naimark theorem each $C^{\ast}$-algebra is
isometrically *-isomorphic to a $C^{\ast}$-algebra of bounded
operators in Hilbert space. This isometric *-isomorphism is called
the {\it universal *-representation}. We denote it by $\pi$.

For every $\Omega\in\mathcal{T}(\X)$ and $a\in\Ao$, we put
$$\vp_\Omega(\pi(a))=\omega_\Omega(a).$$

Then, for each $\Omega\in \mathcal{T}(\X)$, $\vp_\Omega$
is a positive bounded linear functional on the operator algebra
$\pi(\Ao).$

Clearly,
$$\vp_\Omega(\pi(a))=\omega_\Omega(a)=\Omega(a,e).$$
{  Using
 the fact that the family $\{p_\lambda\}$ is directed, we conclude that  there exist $\gamma>0$ and $\lambda \in \Lambda$ such that
$$\mid\vp_\Omega(\pi(a))\mid=\mid\omega_\Omega(a)\mid=
\mid\Omega(a,e)\mid\leq \gamma^2 p_{\lambda}(ae)p_\lambda(e).$$
Then, by (T$_3$), for some $\lambda'\in \Lambda$

$$\mid\vp_\Omega(\pi(a))\mid \leq \gamma^2 \| a \|_{\scriptscriptstyle 0} p_{\lambda'}( e)^2.$$
}

Thus $\vp_\Omega$ is continuous on $\pi(\Ao).$

By \cite[Vol. 2, Proposition 10.1.1]{kadison}, $\vp_\Omega$ is weakly
continuous and so it extends uniquely to
$\pi(\Ao)^{''}$, by the Hahn-Banach theorem. Moreover,
since $\vp_\Omega$ is a trace on $\pi(\Ao)$, the
extension  $\widetilde{\vp_\Omega}$ is also a trace on  the von Neumann algebra $\mathfrak{M}:=\pi(\Ao)^{''}$  generated by
$\pi(\Ao)$.

Clearly, the set $\mathfrak{N}_\mathcal{T}(\Ao)=\{\widetilde{\vp_\Omega};\,
\Omega\in\mathcal{T}(\X) \}$ is convex.

\bedefi \label{nuovasemi}
The locally convex quasi C*-algebra  $(\X,\Ao)$ is said to be {\em strongly
*-semisimple}  if
\begin{itemize}
\item[(a)] the equality $ \Omega(x,x)=0, $\, for every
$\Omega\in \mathcal{T}(\X)$, implies $x=0$;\label{SLP2}
\item[(b)] the set $\mathfrak{N}_\mathcal{T}(\Ao)$ is $w^{\ast}$-closed.
\end{itemize}
\findefi

\medskip
\begin{rem}
{\em If $(\X,\A_{\scriptscriptstyle 0})$ is a {\em CQ*-algebra}, by \cite[Proposition 4.1]{bttjmp}, (b) is automatically satisfied}.
\end{rem}

\beex Let $\M$ be a von Neumann
algebra and $\vp$ a normal faithful semifinite trace on $\M_+$.
{ Then, as seen in Example \ref{ex_2.3},  if $\vp$ is a finite trace, then $(L^p(\vp), \M)$, with $p\geq 2$, is a *-semisimple  locally convex quasi C*-algebra. The conditions (a) and (b) of Definition \ref{nuovasemi} are satisfied.
 Indeed, in this case, the set $\mathfrak{N}_\mathcal{T}(\M)$ is $w^{\ast}$-closed by \cite[Proposition 4.1]{bttjmp}.
  Therefore $(L^p(\vp), \M)$, with $\vp$ finite, is a {strongly *-semisimple} locally convex quasi C*-algebra .}
 \enex

Let $(\X,\Ao)$ be a locally convex quasi C*-algebra with unit $e$, $\pi$ the universal representation of $\Ao$ and $\M=\pi(\Ao)^{''}$.
Denote by $\norm{f}^\sharp$ the norm of a bounded
functional $f$ on $\M$ and by  $\M^{\sharp}$ the topological dual of $\M$ then,
the norm $\norm{\widetilde{\vp_\Omega}}^\sharp$ of
$\widetilde{\vp_\Omega}$ as a linear functional on $\mathfrak{M}$
equals the norm of $\vp_\Omega$ as a functional on
$\pi(\Ao).$

{ By (iv) of Definition \ref{SLP},
$\norm{\widetilde{\vp_\Omega}}^\sharp={\widetilde\vp_\Omega}(\pi(e))=\Omega(e,e)\leq 1.
$}

Hence, if (b) of Definition \ref{nuovasemi} is satisfied, the set $\mathfrak{N}_\mathcal{T}(\Ao)$, being a $w^{\ast}$-closed subset of the unit ball of $\M^{\sharp}$,
is $w^{\ast}$-compact.

Let $\mathfrak{EN}_\mathcal{T}(\Ao)$ be
the set of extreme points of $\mathfrak{N}_\mathcal{T}(\Ao)$;
then $\mathfrak{N}_\mathcal{T}(\Ao)$ coincides with
$w^{\ast}$-closure of the convex hull of
$\mathfrak{EN}_\mathcal{T}(\Ao).$
Let
 $\pi:\Ao \hookrightarrow \BH$ be its universal representation.

Then $\mathfrak{EN}_\mathcal{T}(\Ao)$ is a family of normal finite traces on the von Neumann
algebra $\mathfrak{M}.$

 We put $\mathcal{F}:= \{\Omega\in\mathcal{T}(\X);\,   \, \widetilde{\vp_\Omega}\in\mathfrak{EN}_\mathcal{T}(\Ao)\}$
 and denote by  $P_\Omega$  the support projection corresponding to the trace $\widetilde{\vp_\Omega}.$
 By \cite[Lemma 3.5]{bttjmp},
 $ \{P_\Omega\}_{\Omega\in\mathcal{F}} $ consists of mutually
orthogonal projections and if $Q:=\bigvee_{\Omega\in \mathcal{F}} {P_\Omega}$ then   $$\mu= \sum_{\widetilde{\vp_\Omega}\in\mathfrak{EN}_\mathcal{T}(\Ao)} \widetilde{\vp_\Omega} $$ is a  normal faithful semifinite trace defined on
the direct sum (see \cite{takesaki} and \cite{S.Triolocirc}) of  von Neumann algebras
 $$Q\mathfrak{M}=\bigoplus_{\Omega\in \mathcal{F}} P_\Omega \mathfrak{M}.$$

\begin{thm}
Let $(\X, \Ao)$ be a strongly  *-semisimple locally convex quasi
C*-algebra with unit $e$, $\pi$ the universal representation of
$\Ao$. Then there exists a monomorphism $$\Phi: x\in\X \rightarrow
\Phi(x):=\widetilde{X}\in\widetilde{{Q\M}}^{\tau_2}$$ with the
following properties:

(i) $\Phi$ extends the isometry $\pi:\Ao \hookrightarrow \BH$
given by the Gelfand-Naimark theorem;

 (ii) $\Phi(x^*)=\Phi(x)^*,\,
$ for every $\,x \in \X;$

 (iii) $\Phi(xy)=\Phi(x)\Phi(y)$, for all
$x,y\in \X$ such that $x \in \Ao$ or $y \in \Ao.$
\end{thm}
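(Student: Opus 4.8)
The plan is to extend the universal representation $\pi$ from $\Ao$ to $\X$ by continuity, the forms in $\mathcal{T}(\X)$ serving as the device that transports the topology $\tau$ on $\X$ onto the topology $\tau_2$ on $\widetilde{Q\M}^{\tau_2}$. Applying the theorem of Section 2 to the von Neumann algebra $Q\M$, to the normal faithful semifinite trace $\mu=\sum_{\Omega\in\mathcal{F}}\widetilde{\vp_\Omega}$ and to the orthogonal family $\{P_\Omega\}_{\Omega\in\mathcal{F}}$ of $\mu$-finite central projections with $\bigvee_{\Omega\in\mathcal{F}}P_\Omega=Q$, we know that $(\widetilde{Q\M}^{\tau_2},Q\M)$ is a locally convex quasi C*-algebra of measurable operators and that $\tau_2$ is generated by the seminorms $q_\Omega(X)=\|XP_\Omega\|_{2,\mu}$, $\Omega\in\mathcal{F}$.

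The first step is the identity that links the two sides. For $a\in\Ao$ and $\Omega\in\mathcal{T}(\X)$, since $\widetilde{\vp_\Omega}\circ\pi$ extends $\omega_\Omega$ and, by property (ii) of Definition \ref{SLP} together with $ae=a$, one gets
\[
\widetilde{\vp_\Omega}(\pi(a)^*\pi(a))=\Omega(a^*a,e)=\Omega(a,a);
\]
for $\Omega\in\mathcal{F}$, using that $\mu$ restricts to $\widetilde{\vp_\Omega}$ on $P_\Omega\M$ and that $P_\Omega$ is the support of $\widetilde{\vp_\Omega}$, this becomes $q_\Omega(\pi(a))^2=\|\pi(a)P_\Omega\|_{2,\mu}^2=\Omega(a,a)$. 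Given $x\in\X$, I then choose a net $(a_\alpha)\subseteq\Ao$ with $a_\alpha\to x$ in $\tau$; by property (iii) of Definition \ref{SLP}, for each $\Omega\in\mathcal{F}$ one has $q_\Omega(\pi(a_\alpha)-\pi(a_\beta))^2=\Omega(a_\alpha-a_\beta,a_\alpha-a_\beta)\le p(a_\alpha-a_\beta)^2\to0$, so $(\pi(a_\alpha))$ is $\tau_2$-Cauchy. By completeness I set $\Phi(x):=\widetilde{X}=\tau_2\text{-}\lim_\alpha\pi(a_\alpha)$; applying the same estimate to the difference of two approximating nets shows $\Phi(x)$ is independent of the chosen net, so $\Phi$ is a well-defined linear map. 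Properties (i)--(iii) then follow by passing to the $\tau_2$-limit: the constant net gives $\Phi|_{\Ao}=\pi$, which is (i); and since the involution and the left and right multiplications by a fixed element of $\Ao$ are $\tau$-continuous on $\X$ and $\tau_2$-continuous on $\widetilde{Q\M}^{\tau_2}$ (property (T$_1$)), we obtain $\Phi(x^*)=\lim_\alpha\pi(a_\alpha^*)=(\lim_\alpha\pi(a_\alpha))^*=\Phi(x)^*$ and, for $y\in\Ao$, $\Phi(xy)=\lim_\alpha\pi(a_\alpha)\pi(y)=\Phi(x)\Phi(y)$ (symmetrically when $x\in\Ao$), which are (ii) and (iii).

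It remains to prove that $\Phi$ is injective. Because $q_\Omega$ is $\tau_2$-continuous and $\Omega$ is $\tau$-continuous, letting $\alpha$ run in the identity above yields $q_\Omega(\Phi(x))^2=\Omega(x,x)$ for every $\Omega\in\mathcal{F}$; hence $\Phi(x)=0$ forces $\Omega(x,x)=0$ on the extreme points $\mathfrak{EN}_\mathcal{T}(\Ao)$ of the $w^{\ast}$-compact convex set $\mathfrak{N}_\mathcal{T}(\Ao)$. The main obstacle is to upgrade this to $\Omega(x,x)=0$ for \emph{every} $\Omega\in\mathcal{T}(\X)$, which is precisely what condition (a) of Definition \ref{nuovasemi} needs in order to conclude $x=0$. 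I would argue as follows. For each $\Omega\in\mathcal{T}(\X)$ the net $(\pi(a_\alpha))$ is Cauchy also for the seminorm $Y\mapsto\widetilde{\vp_\Omega}(Y^*Y)^{1/2}$, since $\widetilde{\vp_\Omega}(\pi(a_\alpha-a_\beta)^*\pi(a_\alpha-a_\beta))=\Omega(a_\alpha-a_\beta,a_\alpha-a_\beta)\le p(a_\alpha-a_\beta)^2$, so it converges in $L^2(\widetilde{\vp_\Omega})$ to some $\xi_\Omega$ with $\|\xi_\Omega\|^2=\Omega(x,x)$. If $\Phi(x)=0$ then $\pi(a_\alpha)\to0$ in $\tau_2$, hence in $\mu$-measure on each corner $P_\Omega\HH$; since each $\widetilde{\vp_\Omega}$ is a normal trace supported under $Q$, it is absolutely continuous with respect to $\mu$, so $\pi(a_\alpha)\to0$ in $\widetilde{\vp_\Omega}$-measure as well, forcing $\xi_\Omega=0$ and thus $\Omega(x,x)=0$. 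Equivalently, one may decompose $\widetilde{\vp_\Omega}$ barycentrically over $\mathfrak{EN}_\mathcal{T}(\Ao)$ and pass the limit inside the integral.

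I expect the genuine work to lie exactly in this passage from the extreme traces to an arbitrary $\Omega\in\mathcal{T}(\X)$: identifying the $L^2(\widetilde{\vp_\Omega})$-limit of $\pi(a_\alpha)$ with (the localization of) $\Phi(x)$, and justifying the interchange of limit and trace, for which the absolute continuity $\widetilde{\vp_\Omega}\ll\mu$ and the lower semicontinuity of the trace are the key tools. A minor point to be recorded along the way is that $a\mapsto Q\pi(a)$ is faithful on $\Ao$, so that the restriction of $\Phi$ to $\Ao$ may legitimately be identified with the isometry of (i). Once $\Omega(x,x)=0$ for all $\Omega\in\mathcal{T}(\X)$ is established, condition (a) of Definition \ref{nuovasemi} gives $x=0$, and $\Phi$ is a monomorphism with the three stated properties.
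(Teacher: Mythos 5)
Your construction of $\Phi$ coincides with the paper's: the same Cauchy estimate $\|P_\Omega(\pi(a_\alpha)-\pi(a_\beta))\|_{2,\widetilde{\vp_\Omega}}^2=\Omega(a_\alpha-a_\beta,a_\alpha-a_\beta)\le \gamma^2 p_{\lambda'}(a_\alpha-a_\beta)^2$, the definition of $\Phi(x)$ as a $\tau_2$-limit, the net-independence argument, and (i)--(iii) by continuity (the paper is in fact terser than you on (i)--(iii)). You also reduce injectivity, exactly as the paper does, to the implication: $\Phi(x)=0$ forces $\Omega(x,x)=0$ for all $\Omega\in\mathcal{F}$, i.e.\ for the traces corresponding to extreme points. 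The divergence --- and the problem --- lies in how you pass from $\mathcal{F}$ to an arbitrary $\Omega\in\mathcal{T}(\X)$.

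The paper does this via Krein--Milman: condition (b) of Definition \ref{nuovasemi} makes $\mathfrak{N}_\mathcal{T}(\Ao)$ a $w^*$-closed, hence $w^*$-compact, subset of the unit ball of $\mathfrak{M}^\sharp$, so every $\widetilde{\vp_\Omega}$ is a $w^*$-limit of convex combinations of elements of $\mathfrak{EN}_\mathcal{T}(\Ao)$, and one concludes $\Omega(x,x)=0$. Your route instead goes through the localization $\xi_\Omega\in L^2(\widetilde{\vp_\Omega})$ and absolute continuity with respect to $\mu$; two of its steps are genuine gaps as written. First, the claim that each $\widetilde{\vp_\Omega}$, $\Omega\in\mathcal{T}(\X)$, is ``supported under $Q$'' is precisely the point where condition (b) must enter, and your argument never invokes it: without $w^*$-closedness the extreme points need not control $\mathfrak{N}_\mathcal{T}(\Ao)$ at all (they may even fail to exist), and nothing forces $\widetilde{\vp_\Omega}(Q^\perp)=0$. (Once (b) is available the claim is easy: write $\widetilde{\vp_\Omega}$ as a $w^*$-limit of convex combinations of extreme traces and evaluate at the single operator $Q^\perp\in\mathfrak{M}$; but this must be said.) Second, the implication ``$\pi(a_\alpha)\to 0$ in $\mu$-measure on each corner $P_{\Omega'}\HH$, and $\widetilde{\vp_\Omega}\ll\mu$, hence $\pi(a_\alpha)\to 0$ in $\widetilde{\vp_\Omega}$-measure'' is not valid as stated. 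Convergence in measure with respect to each corner trace $\widetilde{\vp_{\Omega'}}$ separately does not yield convergence in $\mu$-measure (there are in general infinitely many corners and no uniformity across them), and bare absolute continuity (null projections to null projections) does not transport convergence in measure: for that one needs $\varepsilon$--$\delta$ absolute continuity, and the Radon--Nikodym density of $\widetilde{\vp_\Omega}$ with respect to $\mu$ may be an unbounded operator. The step can be repaired --- on each corner the two finite traces satisfy $\varepsilon$--$\delta$ absolute continuity (truncate the density), and the sum over corners is controlled by dominated convergence using normality and $\sum_{\Omega'\in\mathcal{F}}\widetilde{\vp_\Omega}(P_{\Omega'})=\widetilde{\vp_\Omega}(Q)\le 1$ --- but none of this appears in your text, and it is exactly the work.

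Your closing instinct, that the substance of the theorem sits in this passage from extreme traces to general ones, is correct; and, for what it is worth, your localization strategy, once completed, has a real advantage over the paper's one-sentence conclusion: it only requires evaluating $w^*$-limits at the fixed element $Q^\perp$, whereas passing to the limit in $\Omega_\gamma(x,x)\to\Omega(x,x)$ for $x\in\X\setminus\Ao$ involves interchanging the $w^*$-limit with the $\tau$-limit defining $\Omega$ on $\X$, which is delicate because the bounding seminorm in (iii) of Definition \ref{SLP} depends on $\Omega$. But as submitted, your proof of injectivity is incomplete at the two points above.
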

\begin{proof}
Let  {  $\{p_\lambda\}_{\lambda \in \Lambda}$ be, as before, the family} of
seminorms defining the topology $\tau$ of $\X.$ For fixed
$x\in\X$, there exists a net $\{a_\alpha;\, \alpha\in \Delta \}$ of
elements of $\Ao$ such that $p_\lambda(a_\alpha-x) \rightarrow 0$
for each ${\lambda \in \Lambda}.$ We put $X_\alpha= \pi(a_\alpha).$

By (iii) of Definition \ref{sesqu}, for every $\Omega\in
\mathcal{T}(\X),$ there exist { $\gamma>0$ and  ${\lambda' \in \Lambda}$}  such that for each $\alpha, \beta\in\Delta$

\begin{eqnarray*} \label{singolo} \|P_\Omega (X_\alpha-X_\beta)\|_{2,\widetilde{\vp_{\Omega}}}
&=&\|P_\Omega(\pi(a_\alpha)-\pi(a_\beta))\|_{2,\widetilde{\vp_{\Omega}}}\\
&=&[\widetilde{\vp_{\Omega}}(|P_\Omega(\pi(a_\alpha)-\pi(a_\beta))|^2)]^{1/2}=\\
&=&[\Omega \left((a_\alpha-a_\beta)^*(a_\alpha-a_\beta),e\right)]^{1/2} \\ &=&  [\Omega(a_\alpha-a_\beta,a_\alpha-a_\beta)]^{1/2}\leq \gamma p_{\lambda'}(a_\alpha-a_\beta)
\underset{\alpha, \beta}\longrightarrow 0.
\end{eqnarray*}

 Let $ \widetilde{X_\Omega}$ be the
$\|\cdot\|_{2,\widetilde{\vp_{\Omega}}}$-limit
of the net $(P_\Omega{X_\alpha})$ in $L^2(\widetilde{\vp_\Omega}).$
Clearly $ \widetilde{X_\Omega}=P_{\Omega}\widetilde{X_\Omega}$

We define $$\Phi(x):=\sum_{\Omega\in \mathcal{F}} P_{\Omega}
\widetilde{X_\Omega}=:\widetilde{X}.$$ Clearly  $\widetilde{X}\in
\widetilde{{Q\M}}^{\tau_2}$.

It is easy to see that the map $x \ni \X \mapsto
\widetilde{X}\in\widetilde{Q\M}^{\tau_2}$ is well defined and injective.
Indeed, if $a_\alpha \rightarrow 0  $, { there exist $\gamma >0$ }and ${\lambda' \in \Lambda}$  such that

\begin{eqnarray*} \label{singolo2} \|P_\Omega X_\alpha\|_{2,\widetilde{\vp_{\Omega}}}
&=&\|P_\Omega\pi(a_\alpha)\|_{2,\widetilde{\vp_{\Omega}}}\\
&=&[\widetilde{\vp_{\Omega}}(|P_\Omega(\pi(a_\alpha)|^2)]^{1/2}=\\
&=&[\Omega \left(a_\alpha^*a_\alpha,e\right)]^{1/2} \\ &=&  [\Omega(a_\alpha,a_\alpha)]^{1/2}\leq \gamma p_{\lambda'}(a_\alpha)
\rightarrow 0.
\end{eqnarray*}
Thus $P_\Omega (X_\alpha)= 0$ for every $\Omega\in
\mathcal{T}(\X),$  then $\widetilde{X}=0.$
Moreover if  $P_{\Omega} \widetilde{X}=0$, for each $\Omega\in\mathcal{F},$ then
 $\Omega(x,x)=0$ for every $\Omega\in\mathcal{F}.$
{  Since every $\Omega \in {\mc T }(\X)$ is a w*-limit of of convex combinations of elements of ${\mc F}$, we get}
$\Omega(x,x)=0$ for every $\Omega\in
\mathcal{T}(\X).$ Therefore, by assumption, $x=0.$

\end{proof}
\berem

In the same way we can prove that

\begin{itemize}
\item
If $(\X, \Ao)$ is a strongly  *-semisimple locally convex quasi
C*-algebra and there exists a faithful $\Omega\in\mathcal{T}(\X)$
(i.e., the equality $ \Omega(x,x)=0, $\, implies $x=0$) then there
exists a monomorphism $$\Phi: x\in\X \rightarrow
\Phi(x):=\widetilde{X}\in L^2(\widetilde{\vp_{\Omega}})$$ with the
following properties:

(i) $\Phi$ extends the isometry $\pi:\Ao \hookrightarrow \BH$
given by the Gelfand-Naimark theorem;

(ii) $\Phi(x^*)=\Phi(x)^*, \, $ for every $x \in \X$

(iii) $\Phi(xy)=\Phi(x)\Phi(y)$, for all $x,y\in \X$ such that $x
\in \Ao$ or $y \in \Ao.$

\item If the semifinite von Neumann algebra $\pi(\Ao)''$ admits an orthogonal family of mutually equivalent projections  $\{P'_i; \,i \in I\} $ such that $\sum_{i\in I} P'_i=\id$,
then is easy to see that the map $x \in \X \to
\widetilde{X}\in L^2_{loc}(\tau)$ is a monomorphism.
\end{itemize}
\enrem

\medskip
\noindent{\bf Acknowledgement.\;} The authors wish to express their gratitude to the referee for pointing out several inaccuracies in a previous version of the paper and for his/her fruitful comments.

\medskip

\bibliographystyle{amsplain}

\end{document}